\newcommand{\mypar}[1]{\vspace{0.03in}\noindent{\bf #1.}}
\newtheorem{theorem}{Theorem}
\newtheorem{lemma}[theorem]{Lemma}
\newtheorem{definition}[theorem]{Definition}
\newtheorem{corollary}[theorem]{Corollary}
\newtheorem{assumption}[theorem]{Assumption}
\begin{document}
\title{Distributed Detection over Time Varying Networks: Large
Deviations Analysis}

\author{Dragana Bajovi\'c, Du$\breve{\mbox{s}}$an Jakoveti\'c, Jo\~ao
Xavier, Bruno Sinopoli and Jos\'e M.~F.~Moura 
\thanks{
This work is partially supported by: the Carnegie Mellon|Portugal
Program under a grant from the Funda\c{c}\~{a}o para a Ci\^encia e Tecnologia (FCT) from Portugal; by FCT grants SIPM
PTDC/EEA-ACR/73749/2006; and by ISR/IST plurianual funding (POSC program,
FEDER). Work of Jos\'e~M.~F.~Moura is partially supported by NSF under grants CCF-1011903 and CCF-1018509, and by AFOSR grant
FA95501010291. Dragana Bajovi\'c and Du$\breve{\mbox{s}}$an Jakoveti\'c hold fellowships from FCT.}
\thanks{D. Bajovi\'{c} and Du$\breve{\mbox{s}}$an Jakoveti\'c are with the
Institute for Systems and Robotics
(ISR), Instituto Superior T\'{e}cnico (IST), Lisbon, Portugal, and with
the Department of Electrical and Computer Engineering, Carnegie Mellon
University, Pittsburgh, PA, USA {\tt\small dbajovic@andrew.cmu.edu,
djakovet@andrew.cmu.edu}}%
\thanks{J. Xavier is with the Institute for Systems and Robotics (ISR),
Instituto Superior T\'{e}cnico (IST), Lisbon, Portugal {\tt\small
jxavier@isr.ist.utl.pt}}
\thanks{B. Sinopoli and Jos\'e M.~F.~Moura are with the Department of
Electrical and Computer
Engineering, Carnegie Mellon University, Pittsburgh, PA, USA {\tt\small
brunos@ece.cmu.edu, moura@ece.cmu.edu}}}%


\maketitle

\begin{abstract}
We apply large deviations theory to study asymptotic performance of \emph{running
consensus} distributed detection in sensor networks. Running consensus is a stochastic approximation type algorithm,
 recently proposed. At each time step $k$, the state at
each sensor is updated
  by a local averaging of the sensor's own state and the states of its neighbors
(consensus) and by
  accounting for the new observations (innovation). We assume Gaussian,
spatially correlated observations. We allow the underlying network be time varying,
   provided that the graph that collects the union of links that are online at least once over a finite time window is connected. This paper shows through large deviations that, under stated assumptions on the network connectivity and
      sensors' observations, the running consensus detection asymptotically
      approaches in performance the optimal centralized detection. That is, the Bayes probability
of detection error (with the running consensus detector) decays exponentially to zero as $k \rightarrow \infty$ at the Chernoff information rate--the best achievable rate of the asymptotically optimal centralized detector.
\end{abstract}

\section{Introduction}

We apply large deviations to study the asymptotic performance of
 distributed detection in sensor networks. Each
 node in the network senses the environment and cooperates locally with its
 neighbors to decide between the two hypothesis, $H_1$ and $H_0$.
 The nodes are connected
 by a generic, time varying network, and there is no fusion center.
  Specifically, we consider distributed detection via running
consensus\footnote{The running consensus algorithm is
  a type of recursive stochastic approximation algorithm, see,
e.g.,~\cite{SoummyaEst}. Reference~\cite{SoummyaEst}
   studies more general stochastic approximation
    type algorithms in the context of distributed estimation. We use the
algorithm in form given
    in~\cite{running-consensus} and will refer to it as running
consensus.} that has been recently proposed
in~\cite{running-consensus}. With running consensus,
   at each time $k$, $N$ nodes update their decision variables by:
   1) incorporating new observation (innovation step); and 2) mixing their
decision variables locally
    with the neighbors (consensus step).

We allow the underlying communication graph be (deterministically)
 time varying; but we assume that the graph that collects all communication links
that are online (at least once) within a finite time window $B$ is connected.
We assume Gaussian, spatially correlated, time--uncorrelated
sensors' observations. Under stated assumptions on the network connectivity and
 the sensors' observations, we show that the running consensus distributed detector
 is asymptotically optimal, as the number of observations $k$
  goes to infinity. That is, the
  running consensus distributed detector asymptotically approaches the
performance of the optimal centralized detector. We apply large deviations
   to study the asymptotic performance of both the (asymptotically)
optimal centralized detector,
    which collects observations from all nodes $i$ at each time $k$, and the
  running consensus detector. For both detectors, the Bayes probability
  of error decays as $e^{-k\mathcal{C}}$,
  where $\mathcal{C}$ is the Chernoff distance
  between the distributions of the $N\times1$ observation vectors under
the two hypothesis, i.e., the Chernoff information.



We now briefly review the existing work on distributed detection.
 Distributed detection has been extensively studied.
  Prior work studies parallel fusion architectures (see,
e.g.,~\cite{Varshney-I,Veraavali,Veraavali-LDP,Tsitsiklis-detection,Poor-II,Moura-saddle-point})
 where all nodes communicate with a fusion node. Also,
 consensus-based detection schemes have been studied (with no fusion node)
in, for
example,~\cite{Moura-detection-consensus,moura-cons-detection,consensus-detection},
  where nodes in the network: 1) collect measurements; and 2)
   \emph{subsequently} run the consensus algorithm to fuse their detection
rules.
   The running consensus distributed detection has been proposed
in~\cite{running-consensus-detection}. Running consensus is different
from classical consensus detection,
    as it incorporates new observations at each time step $k$,
    in real time; thus, unlike classical consensus, no delay
    is introduced from collecting observations to reaching consensus.

We now comment on the differences
between this paper and reference~\cite{running-consensus-detection}, which
also
 studies asymptotic optimality
  of distributed detection via running consensus.
Reference~\cite{running-consensus-detection}
   considers the Neyman-Pearson framework,
   while we adopt the Bayesian framework.
Reference~\cite{running-consensus-detection} considers that, as the
number of observations $k$ grows,
   the distribution means under the two hypothesis become closer and
   closer, at the rate of $1/\sqrt{k}$; consequently, as $k \rightarrow
\infty$, there is an asymptotic,
    non zero, probability of miss, and asymptotic, non zero, probability
     of false alarm. In contrast, we assume that the distributions do not
      change with $k$ (do not approach each other,) and the Bayes
probability of error decays to zero; we then
       examine the rate of decay of the Bayes error probability. Further,
reference~\cite{running-consensus-detection} assumes that
the observations at different sensors are independent identically
        distributed, with generic distribution,
        while we assume Gaussian;
          however, we allow for spatial correlation among observations--a
well-suited assumption, e.g., for
          densely deployed wireless sensor networks (WSNs).
Reference~\cite{running-consensus-detection} studies
          the case where the underlying network is randomly varying; we
consider deterministically time varying network.

\mypar{Paper organization} Section~II reviews the large deviations results
and the Chernoff lemma in hypothesis testing.
Section~III explains data and network models that we assume.
Section~IV introduces the (asymptotically) optimal centralized
detection, as if there was a fusion node and
its detection performance. Section~V shows that the distributed running
consensus
  detector asymptotically approaches in performance the optimal
centralized detector. Finally, section~VI summarizes the paper.

 %
%
%
%
%

%
\section{Background}
%
%
%
 In this section, we briefly review
 standard large deviations analysis for binary hypothesis testing and
  standard asymptotic results (in particular, Chernoff lemma) in binary
hypothesis testing.
  We will later use these results throughout the paper.
\subsection{Binary hypothesis testing problem: Log-likelihood ratio test}
Consider the sequence of independent identically distributed (i.i.d.)
$d$-dimensional random vectors (observations) $y(k)$, $k=1,2,...$, and the
binary hypothesis testing problem of deciding whether
 the probability measure (law) generating $y(k)$ is $\nu_0$ (under
hypothesis $H_0$) or $\nu_1$ (under $H_1$).
  Assume that $\nu_1$ and $\nu_0$ are mutually
 absolutely continuous, distinguishable measures. Based on the
 observations $y(1),...,y(k)$, formally, a decision
 test $T$ is a sequence of maps $T_k: {\mathbb R}^{kd} \rightarrow
\{0,1\}$, $k=1,2,...$, with the interpretation that
$T_k(y(1),...,y(k))=l$ means that $H_l$ is decided, $l=0,1$.
Specifically, consider the log-likelihood ratio (LLR) test to decide
between
  $H_0$ and $H_1$, where $T_k$ is given as follows:
  \begin{eqnarray}
  \label{eqn-llr-test-generic}
  \mathcal{D}(k)&:=&\frac{1}{k} \sum_{j=1}^k \log \frac{d \nu_1}{d
\nu_0}\left(y(j)\right) \\
  T_k &=& \mathcal{I}_{\{\mathcal{D}(k)>\gamma_k\}}.
  \end{eqnarray}
Here $L(k):=\log \frac{d \nu_1}{d \nu_0}\left(y(k)\right)$ is the LLR
(given by the Radon-Nikodym derivative of
 $\nu_1$ with respect to $\nu_0$ evaluated at $y(k)$), $\gamma_k$ is a
chosen threshold, and
 $\mathcal{I}_A$ is the indicator of event $A$. The LLR test with threshold
 $\gamma_k=0$, $\forall k$, is asymptotically optimal in the sense of
Bayes probability of error decay rate, as will be explained in
next subsection (II-B).
\subsection{Log-likelihood ratio test: Large deviations}
This subsection studies large deviations for the LLR decision test with
decision variables $\mathcal{D}(k)$ given in
eqn.~\eqref{eqn-llr-test-generic}. The large deviations
analysis will be very useful in estimating the exponential rate at which
the Bayes probability of error decays and
 in showing the asymptotic optimality of the distributed running consensus
detector. We first give the definition of the large deviations
principle~\cite{DemboZeitouni}.

\begin{definition}[Large deviations principle (LDP)]
\label{definition-ldp}
Consider a sequence of real valued random variables
$\{\Theta(k)\}_{k=1}^\infty:=\{\Theta(k)\}$ and
denote by $\theta_k$ the probability measure of $\Theta(k)$. We say that
the sequence of measures $\{\theta_k\}$ satisfies the LDP with a rate
function
$\mathcal{J}: \mathbb R \rightarrow {\mathbb R} \cup \{+ \infty\}$
if the following holds:
\begin{enumerate}
\item For any closed, measurable set $F \subset \mathbb R$:
\[
\limsup_{k \rightarrow \infty} \frac{1}{k} \log \theta_k\left( F \right)
\leq -\inf_{t \in F} \mathcal{J}(t)
\]
\item For any open, measurable set $G \subset \mathbb R$:
\[
\liminf_{k \rightarrow \infty} \frac{1}{k} \log \theta_k\left( G \right)
\geq -\inf_{t \in G} \mathcal{J}(t).
\]
\end{enumerate}
\end{definition}

It can be shown that the sequence of LLR's $\{L(k)\}$, conditioned on
$H_l$, $l=0,1$, is i.i.d. Denote
by $\mu_k^{(l)}$ the probability measure of $\mathcal{D}(k)$ under
hypothesis $H_l$. Using
Cram\'er's theorem (\cite{DemboZeitouni}), it can be shown that the
sequence of measures $\{\mu_k^{(l)}\}$, $l=0,1$,
 satisfies the LDP with good\footnote{Goodness of rate function is
compactness of its sublevel sets.} rate function:
 \begin{equation}
 \label{eqn-lambda-generic}
 \Lambda^\star_{(l)}(t) = \sup_{\lambda \in \mathbb R}  \left( \lambda t -
\Lambda_{(l)}(\lambda) \right),
 \end{equation}
where $\Lambda_{(l)}(\cdot)$ is the log-moment generating function of
$L(k)$ under hypothesis $H_l$:
\begin{equation}
\Lambda_{(l)}(\lambda) = \log \mathbb E \left[ e^{\lambda L(k)}| H_l \right].
\end{equation}
That is, the rate function $\Lambda_{(l)}^\star(t)$ is the
Fenchel-Legendre (F-L) (\cite{DemboZeitouni}) transform of the log-moment
generating function of $L(k)$ under $H_l$. It can be shown that
$\Lambda_{(1)}^\star(t) = \Lambda_{(0)}^\star(t)-t$. We summarize this
result in the following theorem, e.g.,~\cite{DemboZeitouni}:
\begin{theorem}
\label{theorem-ldp}
The sequence of measures $\{ \mu_k^{(l)} \}$ of $\mathcal{D}(k)$ under
$H_l$ satisfies the LDP with good rate function given by
eqn.~\eqref{eqn-lambda-generic}. \end{theorem}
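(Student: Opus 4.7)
My plan is to derive the theorem as a direct instance of Cram\'er's theorem in $\mathbb{R}$. The key observation is that under hypothesis $H_l$ the observations $y(1),y(2),\ldots$ are i.i.d.\ with law $\nu_l$, and the log-likelihood ratios $L(k)=\log\frac{d\nu_1}{d\nu_0}(y(k))$ are obtained from them by a fixed measurable transformation. Therefore $\{L(k)\}$ is itself i.i.d.\ under $H_l$, and $\mathcal{D}(k)=\frac{1}{k}\sum_{j=1}^{k}L(j)$ is precisely the empirical mean of this sequence, bringing us into the setting where Cram\'er's theorem applies.

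With this setup, I would invoke Cram\'er's theorem (\cite{DemboZeitouni}) applied to the i.i.d.\ sequence $\{L(k)\}$ under $H_l$. This immediately delivers the upper and lower bounds of Definition~\ref{definition-ldp} for $\mathcal{D}(k)$, with rate function $\Lambda^\star_{(l)}$ equal to the Fenchel-Legendre transform of $\Lambda_{(l)}$, which is exactly~\eqref{eqn-lambda-generic}. The rate function is automatically convex and lower semicontinuous, being the Legendre transform of a convex function. The one non-automatic point --- goodness, i.e.\ compactness of the sublevel sets $\{t:\Lambda^\star_{(l)}(t)\le c\}$ --- is obtained by verifying that $0$ lies in the interior of the effective domain $\{\lambda:\Lambda_{(l)}(\lambda)<\infty\}$. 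Under the paper's Gaussian observation model this verification is straightforward, since $L(1)$ is at most a quadratic form in a Gaussian vector and hence $\mathbb{E}[e^{\lambda L(1)}\mid H_l]$ is finite on an open interval containing $0$; standard properties of the Legendre transform then yield goodness.

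Finally, the relation $\Lambda^\star_{(1)}(t)=\Lambda^\star_{(0)}(t)-t$ follows by a simple change of measure. Since $d\nu_1/d\nu_0=e^{L(1)}$,
\begin{equation*}
\Lambda_{(1)}(\lambda)=\log\mathbb{E}\!\left[e^{\lambda L(1)}\mid H_1\right]=\log\mathbb{E}\!\left[e^{(\lambda+1)L(1)}\mid H_0\right]=\Lambda_{(0)}(\lambda+1),
\end{equation*}
and the substitution $\mu=\lambda+1$ in the definition of the Legendre transform gives $\Lambda^\star_{(1)}(t)=\sup_{\mu}\bigl((\mu-1)t-\Lambda_{(0)}(\mu)\bigr)=\Lambda^\star_{(0)}(t)-t$. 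The principal obstacle I anticipate is the verification of goodness; once that is in hand the remainder of the argument is essentially a direct appeal to the general theory together with the elementary change-of-measure identity just described.
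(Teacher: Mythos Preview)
Your proposal is correct and matches the paper's approach exactly: the paper does not give a self-contained proof of this theorem but simply states it as a direct consequence of Cram\'er's theorem, citing~\cite{DemboZeitouni}, after noting that the $L(k)$ are i.i.d.\ under each hypothesis. Your write-up supplies more detail than the paper does (the goodness verification and the change-of-measure identity $\Lambda_{(1)}(\lambda)=\Lambda_{(0)}(\lambda+1)$), but the underlying argument is the same.

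One small remark: Theorem~\ref{theorem-ldp} is stated in the paper's general background section, before the Gaussian model is introduced, so your appeal to the Gaussian structure to verify that $0$ lies in the interior of $\{\lambda:\Lambda_{(l)}(\lambda)<\infty\}$ is, strictly speaking, outside the scope of the statement as written. The paper sidesteps this by simply quoting the result from~\cite{DemboZeitouni}; if you want a fully general argument you would need an additional moment assumption on $L(1)$, but for the paper's purposes (where only the Gaussian case is ultimately used) your verification is exactly what is needed.
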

\subsection{Asymptotic Bayes detection performance: Chernoff lemma}
\label{subsect-chernoff-lemma}
We adopt the Bayes minimum probability
 of error detection. Denote by $P^e(k)$ the Bayes probability of error
after $k$ samples are processed: \begin{equation}
 \label{eqn-bayes-P-e}
P^e(k) = P \left( H_0 \right) \alpha(k) + P \left( H_1 \right) \beta(k),
\end{equation}
where $P \left( H_l \right)$ are the prior probabilities, $\alpha(k):=P
\left(  \mathcal{D}(k)>\gamma_k | H_0\right)$ and $\beta(k):
=P\left(  \mathcal{D}(k)\leq \gamma_k|H_1 \right)$ are, respectively,
the probability of false alarm and the probability of miss,
 and $\gamma_k$ is the test threshold.

We will be interested in the rate at which the Bayes probability of error
decays to zero as the number of observations $k$ goes to infinity. Also,
as auxiliary
 results, we will need the rates at which $\alpha(k)$ and $\beta(k)$ go to
zero as $k \rightarrow \infty$. That is, we will be interested in the
following quantities:
\begin{eqnarray}
\label{eqn-p-e-rate}
&\lim_{k \rightarrow \infty}& \frac{1}{k} \log P^e(k)\\
\label{eqn-alpha-rate}
&\lim_{k \rightarrow \infty}& \frac{1}{k} \log \alpha(k) \\
\label{eqn-beta-rate}
&\lim_{k \rightarrow \infty}& \frac{1}{k} \log \beta(k).
\end{eqnarray}
Theorem~\ref{chernoff-lemma} (\cite{DemboZeitouni}) states that, among all
possible decision tests,
the LLR test with zero threshold minimizes~\eqref{eqn-p-e-rate}. This
result is a corollary of the
Theorem~\ref{theorem-rate-alpha-beta} (\cite{DemboZeitouni}), that asserts
that, for a LLR test with fixed threshold $\gamma_k=\gamma$, $\alpha(k)$
and $\beta(k)$  indeed (simultaneously) decay to zero exponentially;
also, Theorem~3 expresses the exponential rate of decay in terms of the
rate functions defined in
eqns.~\eqref{eqn-alpha-rate}~and~\eqref{eqn-beta-rate}. Before
   stating the Theorem, define $\overline{L}_{(l)}: = \mathbb E \left(
L(k)|H_l \right)$, $l=0,1$.
\begin{theorem}
\label{theorem-rate-alpha-beta}
The LLR test with constant threshold $\gamma_k=\gamma$, $\gamma \in
(\overline{L}_{(0)},
 \overline{L}_{(1)})$ satisfies:
\begin{eqnarray}
\lim_{k \rightarrow \infty} \frac{1}{k} \log \alpha(k) &=& -
\Lambda_{(0)}^\star(\gamma)<0\\
\lim_{k \rightarrow \infty} \frac{1}{k} \log \beta(k)  &=& \gamma -
\Lambda_{(0)}^\star(\gamma)<0.
\end{eqnarray}
\end{theorem}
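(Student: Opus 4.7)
The plan is to read off both rates directly from the LDP supplied by Theorem~\ref{theorem-ldp}, once the geometry of the rate function $\Lambda_{(0)}^\star$ near $\gamma$ is understood. The key structural facts I would use are that (i) $\Lambda_{(0)}^\star$ is convex, non-negative, and vanishes exactly at $\overline{L}_{(0)}=\Lambda_{(0)}'(0)$; (ii) $\Lambda_{(0)}^\star$ is non-decreasing on $[\overline{L}_{(0)},\infty)$ and non-increasing on $(-\infty,\overline{L}_{(0)}]$; and (iii) $\Lambda_{(0)}^\star$ is continuous on the interior of the set where it is finite. All three properties are standard consequences of $\Lambda_{(0)}^\star$ being the Fenchel-Legendre transform of the convex log-moment generating function $\Lambda_{(0)}$, together with the elementary identity $\inf_{\lambda}\{-\Lambda_{(0)}(\lambda)\}=0$ attained at $\lambda=0$.

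First I would handle $\alpha(k)$. Since $\alpha(k)=\mu_k^{(0)}((\gamma,\infty))$ and also $\alpha(k)\le\mu_k^{(0)}([\gamma,\infty))$, the LDP upper bound applied to the closed set $[\gamma,\infty)$ gives
\[
\limsup_{k\to\infty}\frac{1}{k}\log\alpha(k)\;\le\;-\inf_{t\ge\gamma}\Lambda_{(0)}^\star(t),
\]
and by monotonicity the infimum on the right is $\Lambda_{(0)}^\star(\gamma)$. The LDP lower bound applied to the open set $(\gamma,\infty)$ gives
\[
\liminf_{k\to\infty}\frac{1}{k}\log\alpha(k)\;\ge\;-\inf_{t>\gamma}\Lambda_{(0)}^\star(t),
\]
and by right-continuity of $\Lambda_{(0)}^\star$ at $\gamma$ (which is interior to the effective domain, because $\overline{L}_{(0)}<\gamma<\overline{L}_{(1)}$ forces both sides of $\gamma$ to lie in the domain where $\Lambda_{(0)}^\star$ is finite) this infimum also equals $\Lambda_{(0)}^\star(\gamma)$. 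The two bounds match, giving the first claim. Strict negativity, i.e.\ $\Lambda_{(0)}^\star(\gamma)>0$, follows since $\Lambda_{(0)}^\star$ vanishes only at $\overline{L}_{(0)}$ and is non-decreasing (strictly, on any interval meeting a neighborhood of $\gamma$ in the interior of the domain) to the right of it.

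Next I would handle $\beta(k)$ in exactly the same way, but with respect to the measure $\mu_k^{(1)}$ under $H_1$, which by Theorem~\ref{theorem-ldp} satisfies an LDP with rate function $\Lambda_{(1)}^\star(t)=\Lambda_{(0)}^\star(t)-t$. Since $\gamma<\overline{L}_{(1)}$, the event $\{\mathcal{D}(k)\le\gamma\}$ is a left-tail large-deviation event for $\mu_k^{(1)}$. Applying the LDP upper bound on the closed set $(-\infty,\gamma]$ and the lower bound on the open set $(-\infty,\gamma)$, together with the analogous monotonicity and continuity of $\Lambda_{(1)}^\star$ around $\gamma$ (it vanishes at $\overline{L}_{(1)}$ and is non-increasing to the left of $\overline{L}_{(1)}$), yields
\[
\lim_{k\to\infty}\frac{1}{k}\log\beta(k)\;=\;-\Lambda_{(1)}^\star(\gamma)\;=\;\gamma-\Lambda_{(0)}^\star(\gamma),
\]
with strict negativity for the same reason as before.

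The main obstacle, such as it is, will be justifying cleanly that the infima over the closed half-line and the corresponding open half-line agree, i.e.\ that the LDP upper and lower bounds actually pinch to the same rate at the boundary point $\gamma$. This reduces to showing $\gamma$ lies in the interior of the effective domain of $\Lambda_{(l)}^\star$, which is where one must use the hypothesis $\gamma\in(\overline{L}_{(0)},\overline{L}_{(1)})$: this condition ensures that $\gamma$ is sandwiched strictly between the two means, so the rate function is finite and continuous at $\gamma$ and monotone on the relevant side. Everything else is bookkeeping within standard convex-analysis/Cram\'er-theorem machinery.
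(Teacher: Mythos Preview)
Your argument is correct and is essentially the standard route to this result: apply the LDP of Theorem~\ref{theorem-ldp} to the half-lines $[\gamma,\infty)$ and $(\gamma,\infty)$ (respectively $(-\infty,\gamma]$ and $(-\infty,\gamma)$ under $H_1$), use convexity and the location of the minimizer of $\Lambda_{(l)}^\star$ to evaluate the infima, and invoke continuity of the rate function at the interior point $\gamma$ to make the upper and lower bounds coincide. Your identification of the only real issue---showing that $\gamma$ lies in the interior of the effective domain of $\Lambda_{(0)}^\star$---is also on target; the observation that $\Lambda_{(0)}^\star(\overline L_{(0)})=0$ and $\Lambda_{(0)}^\star(\overline L_{(1)})=\overline L_{(1)}<\infty$ (from $\Lambda_{(0)}(1)=0$) settles it via convexity.

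As for comparison with the paper: the paper does not give its own proof of this statement. Theorem~\ref{theorem-rate-alpha-beta} is quoted as a known result from \cite{DemboZeitouni} and used as input to the subsequent analysis (Chernoff's lemma and the distributed-detector results). So there is nothing to compare against beyond noting that your sketch is precisely the textbook derivation one would find in that reference.
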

\begin{theorem}[Chernoff lemma]
\label{chernoff-lemma}
If $P(H_0) \in (0,1)$, then:
\begin{equation}
\label{eqn-theorem-chernoff-lemma}
\inf_{T} \liminf_{k \rightarrow \infty} \left\{ \frac{1}{k} \log P^e(k)
\right\} = -\Lambda_{(0)}^\star(0),
\end{equation}
where the infimum over all possible tests $T$ is attained for the LLR test
with $\gamma_k=0$, $\forall k$.
\end{theorem}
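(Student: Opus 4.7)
The plan is a two-sided argument: an achievability side using the LLR test with zero threshold together with Theorem~\ref{theorem-rate-alpha-beta}, and a converse side via the Neyman--Pearson lemma that shows no test can decay faster.

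For achievability, I would first note that $0 \in (\overline{L}_{(0)}, \overline{L}_{(1)})$. This is because $\overline{L}_{(0)} = \mathbb{E}[\log \frac{d\nu_1}{d\nu_0}| H_0] = -D(\nu_0\|\nu_1) < 0$ and $\overline{L}_{(1)} = D(\nu_1\|\nu_0) > 0$ (strict inequalities since $\nu_0,\nu_1$ are assumed distinguishable and mutually absolutely continuous, applying Jensen). Hence Theorem~\ref{theorem-rate-alpha-beta} is applicable at $\gamma = 0$ and yields $\alpha^0(k)$ and $\beta^0(k)$ both decaying at exponential rate $\Lambda_{(0)}^\star(0)$ (using the identity $\Lambda_{(1)}^\star(0) = \Lambda_{(0)}^\star(0)$ stated in Section~II). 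Summing with the priors gives
\[
\limsup_{k\to\infty}\frac{1}{k}\log P^e(k) \leq -\Lambda_{(0)}^\star(0),
\]
which is the ``$\leq$'' half.

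For the converse, fix an arbitrary test $T$ with error probabilities $\alpha_T(k)$ and $\beta_T(k)$, and compare it to the zero-threshold LLR test whose errors are $\alpha^0(k), \beta^0(k)$. The Neyman--Pearson lemma says that the LLR test with threshold $0$ is optimal among tests whose false-alarm probability does not exceed $\alpha^0(k)$: i.e., if $\alpha_T(k) \leq \alpha^0(k)$ then necessarily $\beta_T(k) \geq \beta^0(k)$. Splitting into the two cases $\alpha_T(k) > \alpha^0(k)$ and $\alpha_T(k) \leq \alpha^0(k)$, in either case
\[
P^e(k) \;\geq\; \min\bigl(P(H_0), P(H_1)\bigr)\cdot \min\bigl(\alpha^0(k),\beta^0(k)\bigr).
\]
Since $P(H_0)\in(0,1)$, the prefactor is a positive constant; taking $\frac{1}{k}\log$ and $\liminf$, and invoking Theorem~\ref{theorem-rate-alpha-beta} once more, gives
\[
\liminf_{k\to\infty}\frac{1}{k}\log P^e(k) \;\geq\; -\Lambda_{(0)}^\star(0)
\]
for \emph{every} test $T$, proving the ``$\geq$'' half and showing the infimum is attained by the LLR test at $\gamma_k=0$.

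The argument is essentially routine once Theorem~\ref{theorem-rate-alpha-beta} is in hand, so there is no single hard step; the most delicate point is the converse, where one must remember to invoke Neyman--Pearson correctly to dominate an arbitrary test by the LLR benchmark in both error axes simultaneously. A small sanity check worth recording in the write-up is that $\Lambda_{(0)}^\star(0) > 0$, which follows from $0\in(\overline{L}_{(0)},\overline{L}_{(1)})$ and the fact that $\Lambda_{(0)}^\star$ is a good rate function vanishing only at the mean $\overline{L}_{(0)}<0$; this guarantees the decay in~\eqref{eqn-theorem-chernoff-lemma} is genuinely exponential rather than trivial.
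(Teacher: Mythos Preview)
The paper does not actually prove this theorem; it is stated as a background result with a citation to Dembo--Zeitouni, together with the one-line remark that it is a corollary of Theorem~\ref{theorem-rate-alpha-beta}. Your proposal supplies exactly the standard argument behind that remark---achievability from Theorem~\ref{theorem-rate-alpha-beta} at $\gamma=0$, and a Neyman--Pearson converse bounding any test below by $\min(P(H_0),P(H_1))\cdot\min(\alpha^0(k),\beta^0(k))$---and is correct.
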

The quantity $\Lambda_{(0)}^\star(0)=\Lambda_{(1)}^\star(0)$ is called the
Chernoff distance between
the distributions of $y(k)$ under $H_0$ and $H_1$, or Chernoff
information,~\cite{DemboZeitouni}.

\mypar{Asymptotically optimal test} We introduce the following definition
of the asymptotically optimal test.
\begin{definition}
\label{def-asym-optimal}
The decision test $T$ is asymptotically optimal if it attains the infimum
in eqn.~\eqref{eqn-theorem-chernoff-lemma}.
\end{definition}
We will show that, for the distributed Gaussian hypothesis testing over
time varying networks,
the running consensus is asymptotically optimal in the sense of
Definition~\ref{def-asym-optimal}.




%
%

\section{Distributed detection model: Data and Network models}
This section describes: 1) the data model
(subsection~\ref{subsect-data-model}), i.e.,
the observation model at each sensor
in the network; and 2) the model of the network through which the sensors
 cooperate with the running consensus distributed detection algorithm
(subsection~III-B). The distributed
 detection algorithm is detailed in Section~V.
\subsection{Data model}
\label{subsect-data-model}
We consider Gaussian binary hypothesis testing in spatially correlated noise.
 The sensors operate (in terms of sensing and
communication) synchronously,
 at discrete time steps $k$. At time $k$, sensor $i$ measures
 (scalar) $y_i(k)$. Collect the sensor measurements in
 a vector
$y(k)=(y_1(k),y_2(k),...,y_N(k))^T$, where $N$ is the total number of
sensors. Nature can be in one
of two possible states: $H_1\--$event occurring (e.g., target present); and
$H_0\--$event not occurring (e.g., target absent.) We assume the following
distribution model
for the vector $y(k)$:
\begin{equation}
\mathrm{under}\,\, H_l:\,\,y(k) = m_l + \zeta(k),\,\,l=0,1,
\end{equation}
where $m_l$ is the (constant) signal under hypothesis $H_l$, and
$\zeta(k)$ is zero mean Gaussian additive noise. We assume that
$\{\zeta(k)\}$
 is an independent identically distributed (i.i.d.) sequence of $N \times 1$
 random vectors with distribution $\zeta(k) \sim \mathcal{N}(0,S)$, where $S$
  is a (positive definite) covariance matrix. Thus, with our model,
  the noise is temporally independent, but can be spatially correlated.
  Spatial correlation should be taken into account due to, for example,
  dense deployment of wireless sensor networks, while it is still
reasonable to assume
  that the observations are independent along time. (Conditioned to
   $H_l$, $\{y(k)\}$ are i.i.d. with
   the distribution $\mathcal{N}(m_l,S)$.)

%
%
%
%
%
%
%
\subsection{Network model and data mixing model}
\label{subsect-netw-data-mixing-model}
We consider distributed detection
via running consensus where each node at a time $k$: 1) measures $y_i(k)$;
2) exchanges its
 current decision variable (denote it by $x_i(k)$) with its neighbors; and
3) performs a weighted average of its own decision variable and
 the neighbors' decision variables. The network
  connectivity is assumed time varying. The weighted averaging, at
each time $k$,
 as with the standard consensus algorithm,
 is described by the $N\times N$ weight matrix $W(k)$. We assume $W(k)$
 is a symmetric, stochastic matrix (it has nonnegative entries and the
rows sum to 1.) The weight
  matrix $W(k)$ respects the sparsity pattern of the network,
   i.e., $W_{ij}(k)=0$, if the
link $\{i,j\}$ is down at time $k$.
We define also the undirected graph $G(k)=(\mathcal{V}, \mathcal{E}(k))$,
where $\mathcal{V}$ is the set of nodes
with cardinality $|\mathcal{V}|=N$, and
$\mathcal{E}(k)$
is the set of undirected edges that are online at time $k$.
 Formally, $\mathcal{E}(k)=\left\{\{i,j\}:\,\, i<j,\, W_{ij}(k)>0
\right\}$. Define also $J:=(1/N)11^T$,
 where $1$ is $N\times 1$ vector with unit entries. We now summarize the
assumptions
 on the matrices $\{W(k)\}$ and the graphs $G(k)$:
\begin{assumption}
\label{assumption-W(k)}
For the sequence of matrices $\{W(k)\}=\{W(k)\}_{k=1}^\infty$, we
assume the following:
\begin{enumerate}
\item $W(k)$ is symmetric and stochastic, $\forall k$.
\item There exists a scalar $W_{\mathrm{min}}\in(0,1)$, such that
\emph{i}) $W_{ii}(k) \geq W_{\mathrm{min}}$, $\forall i$, $\forall k$;
 and \emph{ii}) $\forall k$, $W_{ij}(k) \geq W_{\mathrm{min}}$, if $i \neq j$ and $\{i,j\} \in
\mathcal{E}(k)$.
 \item
 There exists an integer $1 \leq B < +\infty$, such that, $\forall k$,
  the graph $\left( \mathcal{V}, \cup_{l=k+1}^{k+B} \mathcal{E}(l)  \right)$
   is connected.
\end{enumerate}
\end{assumption}
Assumption~\ref{assumption-W(k)}-3) says that nodes should communicate
sufficiently
often (within finite time windows,) such that the network provides
sufficiently fast information flow.

\section{Centralized detection: Bayes optimal test}
We first consider the centralized detection scenario, as if there was a
fusion node
 that collects and processes all sensor observations. The decision
variable $\mathcal{D}(k)$
  and the LLR decision test are given by eqns.~(1)~and~(2), where now,
under the data assumptions in subsection~\ref{subsect-data-model}:
\begin{equation}
\label{eqn_LLR}
L(k) = (m_1-m_0)^T S^{-1} \left(   y(k) - \frac{m_1+m_0}{2} \right)
\end{equation}
Conditioned on either hypothesis $H_1$ and $H_0$,
$L(k)\sim\mathcal{N} \left( m_L^{(l)}, \sigma_L^2 \right)$, where
\begin{eqnarray}
\label{eqn-m-L}
m_L^{(l)} &=&   \frac{(-1)^{l+1}}{2} (m_1-m_0)^T S^{-1} (m_1-m_0)\\
\label{eqn-sigma-L}
\sigma_L^2 &=& (m_1-m_0)^T S^{-1} (m_1-m_0).
\end{eqnarray}
Define the vector $v \in {\mathbb R}^N$ as
\begin{equation}
\label{eqn-def-v}
v:=S^{-1}(m_1-m_0).
\end{equation}
Then, the LLR $L(k)$ can be written as follows:
\begin{equation}
\label{eqn-L(k)-summable}
L(k) = \sum_{i=1}^N v_i \left( y_i(k) - \frac{[m_1]_i+[m_0]_i}{2} \right)
= \sum_{i=1}^N \eta_i(k),
\end{equation}
where $[m_l]_i$ denotes the $i$-th entry of vector $m_l$, $l=0,1$.
Thus, the LLR at time $k$ is separable, i.e., the LLR
is the sum of the terms $\eta_i(k)$ that depend affinely on the individual
observations $y_i(k)$.
 We will exploit this fact in subsection~\ref{subsect-runn-cons} to derive
the distributed, running consensus, detection
 algorithm.
%
%
%
%
%
%

Applying Theorem~\ref{theorem-ldp} to the sequence $\{\mathcal{D}(k)\}$
(under hypothesis $H_l$, $l=0,1$),
we have that the sequence of measures
 of $\mathcal{D}(k)$ satisfies the LDP with good rate function $I_{(l)}:
\mathbb R \rightarrow \mathbb R \cup \{+\infty\}$, which, by evaluating
the log-moment generating function of $L(k)$ in~\eqref{eqn_LLR} and its
F-L transform, can be shown to be:
    \begin{equation}
    \label{eqn-rate-fcns-for-centralized-detection}
    I_{(l)}(t) = \frac{(t-m_L^{(l)})^2}{2 \sigma_L^2}, l=0,1.
    \end{equation}
We state this result as a Corollary~\ref{corollary-cent-det-rate}.
\begin{corollary}
\label{corollary-cent-det-rate}
The sequence $\{\mathcal{D}(k)\}$, under $H_l$, $l=0,1$, satisfies the LDP
 with good rate function $I_{(l)}(\cdot)$, given by
eqn.~\eqref{eqn-rate-fcns-for-centralized-detection}.
\end{corollary}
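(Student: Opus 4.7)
The plan is a direct, essentially computational application of Theorem~\ref{theorem-ldp}. The theorem guarantees that $\{\mu_k^{(l)}\}$ satisfies the LDP with the good rate function $\Lambda_{(l)}^\star(t)$ defined in~\eqref{eqn-lambda-generic}, so all that remains is to evaluate this Fenchel--Legendre transform explicitly for the Gaussian log-likelihood ratio given by~\eqref{eqn_LLR}.

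First I would note that, conditioned on $H_l$, $y(k)\sim\mathcal{N}(m_l,S)$, so $L(k)$ in~\eqref{eqn_LLR} is an affine function of a Gaussian vector and therefore itself Gaussian. A quick mean/variance computation, already recorded in~\eqref{eqn-m-L}--\eqref{eqn-sigma-L}, gives $L(k)\mid H_l \sim \mathcal{N}(m_L^{(l)},\sigma_L^2)$. Using the closed form of the MGF of a univariate Gaussian, the log-moment generating function is
\begin{equation*}
\Lambda_{(l)}(\lambda)=\log \mathbb E\!\left[e^{\lambda L(k)}\mid H_l\right] = m_L^{(l)}\lambda+\tfrac{1}{2}\sigma_L^2\lambda^2.
\end{equation*}

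Second, I would compute its Fenchel--Legendre transform. Since $\Lambda_{(l)}$ is a strictly convex quadratic in $\lambda$, the supremum in~\eqref{eqn-lambda-generic} is attained at the unique critical point $\lambda^\star=(t-m_L^{(l)})/\sigma_L^2$. Plugging in and simplifying yields
\begin{equation*}
\Lambda_{(l)}^\star(t)=\lambda^\star t-\Lambda_{(l)}(\lambda^\star)=\frac{(t-m_L^{(l)})^2}{2\sigma_L^2},
\end{equation*}
which is precisely $I_{(l)}(t)$ in~\eqref{eqn-rate-fcns-for-centralized-detection}. Goodness of $I_{(l)}$ is immediate: its sublevel sets $\{t:I_{(l)}(t)\le\alpha\}$ are closed bounded intervals, hence compact.

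There is no real obstacle here; the only thing to double-check is the non-degeneracy condition $\sigma_L^2>0$, which follows from the assumption that $m_1\neq m_0$ (otherwise the hypotheses would be indistinguishable, contradicting the mutual absolute continuity and distinguishability required in Section~II-A) together with the positive definiteness of $S$ assumed in subsection~\ref{subsect-data-model}. With $\sigma_L^2>0$, the computation above is well-posed and Theorem~\ref{theorem-ldp} delivers the Corollary.
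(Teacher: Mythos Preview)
Your proposal is correct and follows essentially the same route as the paper: invoke Theorem~\ref{theorem-ldp}, use the Gaussianity of $L(k)$ under $H_l$ to write $\Lambda_{(l)}(\lambda)=m_L^{(l)}\lambda+\tfrac{1}{2}\sigma_L^2\lambda^2$, and take the Fenchel--Legendre transform to obtain~\eqref{eqn-rate-fcns-for-centralized-detection}. The paper is slightly terser (it omits the explicit optimizer and the goodness check), but the argument is the same.
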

We remark that Theorem~\ref{chernoff-lemma} also applies to the
detection problem explained in subsection~\ref{subsect-data-model}. Denote by $P^e_{\mathrm{cen}}(k)$
the Bayes probability
of error for the centralized detector (defined in section~IV,) after $k$ samples are processed. Due
to the continuity of the rate functions
in~\eqref{eqn-rate-fcns-for-centralized-detection}, it can be shown that:
 $\liminf_{k \rightarrow \infty} \frac{1}{k} \log P^e_{\mathrm{cen}}(k)=
 \limsup_{k \rightarrow \infty} \frac{1}{k} \log P^e_{\mathrm{cen}}(k) =
\lim_{k \rightarrow \infty} \frac{1}{k} \log P^e_{\mathrm{cen}}(k)$.
  Thus, Theorem~\ref{chernoff-lemma} in this case simplifies to the
following corollary:
  \begin{corollary}(\emph{Chernoff lemma for the optimal centralized detector})
  The LLR test with $\gamma_k=0$, $\forall k$, is asymptotically optimal
in the sense
  of definition~\ref{def-asym-optimal}. Moreover, for the LLR test with
$\gamma_k=0$, $\forall k$, we have:
  \begin{eqnarray}
  \lim_{k \rightarrow \infty} \frac{1}{k} \log P^e_{\mathrm{cen}}(k) = -
I_{(0)}(0)\,\,\,\,\,\,\,\, \\
  =-\frac{1}{8} (m_1-m_0)^\top S^{-1}(m_1-m_0).\nonumber
  \end{eqnarray}
  \end{corollary}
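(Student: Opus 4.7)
The plan is to assemble the result from three ingredients already in place: the general Chernoff lemma (Theorem~\ref{chernoff-lemma}), the LDP for $\{\mathcal{D}(k)\}$ in the Gaussian centralized setting (Corollary~\ref{corollary-cent-det-rate}), and the simultaneous exponential decay of $\alpha(k)$ and $\beta(k)$ in Theorem~\ref{theorem-rate-alpha-beta}. Since the data model makes $L(k)$ Gaussian with finite log-moment generating function, the Cram\'er rate function $\Lambda_{(0)}^\star$ entering Theorem~\ref{chernoff-lemma} must coincide with the explicit rate function $I_{(0)}$ identified in Corollary~\ref{corollary-cent-det-rate}; from this alone the asymptotic optimality of the LLR test with $\gamma_k=0$ follows at once, and the claimed limit is $-I_{(0)}(0)$.

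The second step is a direct evaluation. Using $m_L^{(0)} = -\tfrac{1}{2}(m_1-m_0)^\top S^{-1}(m_1-m_0)$ and $\sigma_L^2 = (m_1-m_0)^\top S^{-1}(m_1-m_0)$ from eqns.~\eqref{eqn-m-L}--\eqref{eqn-sigma-L}, substitution into the quadratic rate function~\eqref{eqn-rate-fcns-for-centralized-detection} gives
\[
I_{(0)}(0) \;=\; \frac{(m_L^{(0)})^2}{2\sigma_L^2} \;=\; \frac{\sigma_L^2}{8} \;=\; \frac{1}{8}(m_1-m_0)^\top S^{-1}(m_1-m_0),
\]
which is the stated value. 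Implicit here is the assumption $m_0 \neq m_1$, which is needed anyway for the two Gaussian laws to be distinguishable.

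The third step is to upgrade the $\liminf$ in Theorem~\ref{chernoff-lemma} to a genuine limit, which is the mild technical point flagged in the paragraph preceding the corollary. With $\gamma=0$ we have $\overline{L}_{(0)} = -\sigma_L^2/2 < 0 < \sigma_L^2/2 = \overline{L}_{(1)}$, so Theorem~\ref{theorem-rate-alpha-beta} applies and yields
\[
\lim_{k\to\infty}\tfrac{1}{k}\log\alpha(k) = -I_{(0)}(0), \qquad \lim_{k\to\infty}\tfrac{1}{k}\log\beta(k) = 0 - I_{(0)}(0) = -I_{(0)}(0).
\]
Plugging these into $P^e_{\mathrm{cen}}(k) = P(H_0)\alpha(k)+P(H_1)\beta(k)$ and using the standard fact that $\tfrac{1}{k}\log\bigl(a_k+b_k\bigr) \to \max\bigl(\lim\tfrac{1}{k}\log a_k,\lim\tfrac{1}{k}\log b_k\bigr)$ for exponentially decaying sequences (both priors being in $(0,1)$), the limit of $\tfrac{1}{k}\log P^e_{\mathrm{cen}}(k)$ exists and equals $-I_{(0)}(0)$.

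I do not anticipate a real obstacle: the computation is essentially a specialization of already-cited general theorems to the Gaussian case. The only place that requires mild care is verifying that $\gamma=0$ lies strictly between $\overline{L}_{(0)}$ and $\overline{L}_{(1)}$ so that Theorem~\ref{theorem-rate-alpha-beta} can be invoked, and then combining the two matching decay rates of $\alpha(k)$ and $\beta(k)$ to conclude that the $\liminf$ in Chernoff's lemma is attained as a bona fide limit.
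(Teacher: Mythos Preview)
Your proposal is correct and follows essentially the same line as the paper: the corollary is obtained by specializing Theorem~\ref{chernoff-lemma} to the Gaussian model, identifying $\Lambda_{(0)}^\star$ with $I_{(0)}$ via Corollary~\ref{corollary-cent-det-rate}, and then arguing that the $\liminf$ is in fact a limit. The only difference is in that last step: the paper dispatches it in one sentence by appealing to continuity of the rate functions in eqn.~\eqref{eqn-rate-fcns-for-centralized-detection}, whereas you carry it out explicitly by invoking Theorem~\ref{theorem-rate-alpha-beta} at $\gamma=0$ (after checking $\overline{L}_{(0)}<0<\overline{L}_{(1)}$) and combining the matching exponential rates of $\alpha(k)$ and $\beta(k)$. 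Your route is more self-contained and makes transparent exactly why the limit exists; the paper's remark is terser but ultimately rests on the same ingredients.
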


\mypar{Remark} The LLR test with zero
threshold
is optimal also in the finite time $k$ regime, for all $k$, in the sense
that it minimizes the Bayes probability of error, when the prior
probabilities are $P(H_0)=P(H_1)=0.5$. When the prior probabilities are not equal, the LLR
test is also optimal, but the threshold $\gamma_k$ will be different than zero.
\section{Distributed detection algorithm}
\subsection{Distributed detection via running consensus}
\label{subsect-runn-cons}
We now present a distributed detection algorithm via running consensus.
With this detection algorithm, no fusion node is required, and the
underlying network is generic, time varying. The
running consensus is proposed in~\cite{running-consensus}, and it is a
stochastic approximation type of algorithm (see~\cite{SoummyaEst}).
Reference~\cite{running-consensus} studies the case when the observations
of different sensors at a fixed time $k$ are i.i.d. We extend the running consensus detection algorithm to the case
  of spatially correlated Gaussian observations.


With the running consensus distributed detector, each node
$i$ makes local decisions based on its local decision variable $x_i(k)$: If
$x_i(k)>0$, then $H_1$ is accepted; if $x_i(k) \leq 0$, then $H_0$ is
accepted. At each time step $k$, the local decision variable at node $i$
is improved two-fold: 1) by exchanging information with its immediate
neighbors in the network; 2) by incorporating into the decision process
the new local observation $y_i(k)$. Recall the definition of $\eta_i(k)$
in eqn.~\eqref{eqn-L(k)-summable}. Specifically, the update of the local
decision variable at node $i$ is given by the
following equation:
\begin{eqnarray}
\label{eqn-running-cons-sensor-i}
x_i(k+1) = \frac{k}{k+1}  W_{ii}(k)x_i(k)+\\ \frac{k}{k+1} \sum_{j \in
\Omega_i(k)}W_{ij}(k) x_j(k) + \frac{1}{k+1} N \eta_i(k+1), \,\nonumber \\
 k=1,2,...\nonumber
\end{eqnarray}
\begin{equation*}
 x_i(1) = N \eta_i(1)
\end{equation*}

Here $\Omega_i(k)$ is the (time varying) neighborhood of node $i$ at time $k$, and
$ W_{ij}(k)$ are the (time varying) averaging weights, defined together with the
$N\times N$ (time varying) matrices $W(k)=[W_{ij}(k)]$ in subsection~III-B. Let
$x(k) =
(x_1(k),x_2(k),...,x_N(k))^\top$ and
$\eta(k)=(\eta_1(k),...,\eta_N(k))^\top$.
The algorithm in matrix form is given by:
\begin{eqnarray}
\label{eqn_recursive_algorithm}
x(k+1) = \frac{k}{k+1} W(k) x(k) + \frac{1}{k+1} N \eta(k+1),\\
 k=1,2,...\nonumber
\end{eqnarray}
\begin{equation*}
 x(1) = N \eta(1)
\end{equation*}
%
%
Recall the definition of the $N\times 1$ vector $v$ in~\eqref{eqn-def-v}.
The  sequence of $N\times 1$ random vectors $\{\eta(k)\}$, conditioned to
$H_l$, is i.i.d.
 Vector $\eta(k)$ (under hypothesis $H_l$, $l=0,1$) is Gaussian with mean
$m_{\eta}^{(l)}$ and
covariance $S^{\eta}$:
\begin{eqnarray}
\label{eqn_mu_sigma}
m_{\eta}^{(l)} &=& (-1)^{(l+1)} \mathrm{Diag} \left( v
\right)\,\frac{1}{2}(m_1-m_0)\\
S^{\eta} &=& \mathrm{Diag} \left( v\right)  S  \mathrm{Diag} \left( v\right).
\end{eqnarray}
Here $\mathrm{Diag}(v)$ is a diagonal matrix with the diagonal entries
equal to the entries of $v$.

\subsection{Asymptotic optimality of the distributed detection algorithm}
\label{subection-asympt-analysi-running-cons}
In this subsection, we present our main result, which states that
the distributed detection via running consensus asymptotically achieves the
 performance of the optimal centralized detector,
 in the sense that it approaches the exponential error decay rate
 of the (asymptotically) optimal centralized detector.

Denote the probability measure of $x_i(k)$ under hypothesis
$H_l$ with $\chi^{(l)}_{i,k}$.
First, we show that the sequence of measures $\{\chi_{i,k}\}$,
 for all nodes $i$,
 satisfies the LDP with good rate function; the
 rate function for all nodes $i$ is the same, and it is the same as the
 rate function of the optimal centralized detector in
eqn.~\eqref{eqn-rate-fcns-for-centralized-detection}.

%
We prove that the sequence of measures for $\{x_i(k)\}$ (under $H_l$,
$l=0,1$)
 satisfies the LDP using the G{\"a}rtner-Ellis Theorem from large
deviations theory, see~\cite{DemboZeitouni}. We now state
Theorem~\ref{theorem-ldp-decen}.
\begin{theorem}
\label{theorem-ldp-decen}
Let assumption~\ref{assumption-W(k)} hold. The sequence of measures
$\{\chi_{i,k}^{(l)}\}$,
 for all nodes $i$, satisfies the large deviations
 principle with good rate function. The rate function is the same as for
the optimal centralized
  detector and is given by $I_{(l)}(\cdot)$ in
eqn.~\eqref{eqn-rate-fcns-for-centralized-detection}.
 \end{theorem}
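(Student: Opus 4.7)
The plan is to invoke the G\"artner--Ellis theorem on the sequence $\{x_i(k)\}$ under each hypothesis $H_l$. Since all the innovations $\eta(j)$ are jointly Gaussian and the update~\eqref{eqn_recursive_algorithm} is linear, $x_i(k)$ is itself a Gaussian random variable. Computing its scaled log--moment generating function therefore reduces to tracking the asymptotic behavior of its mean and (scaled) variance, and matching them to the centralized quantities $m_L^{(l)}$ and $\sigma_L^2$ defined in~\eqref{eqn-m-L}--\eqref{eqn-sigma-L}.

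First, I would unroll the recursion by introducing $z(k):=k\,x(k)$, which satisfies $z(k+1)=W(k)z(k)+N\eta(k+1)$ with $z(1)=N\eta(1)$. This gives the closed form
\[
x(k)=\frac{N}{k}\sum_{j=1}^{k}\Phi(k,j)\,\eta(j),\qquad \Phi(k,j):=W(k-1)W(k-2)\cdots W(j),\;\Phi(k,k):=I.
\]
Reading off the $i$-th coordinate, $x_i(k)$ is Gaussian conditional on $H_l$ with mean
\[
\mu_{i,k}^{(l)}=\frac{N}{k}\sum_{j=1}^{k}[\Phi(k,j)]_{i,:}\,m_\eta^{(l)},\qquad
\sigma_{i,k}^{2}=\frac{N^{2}}{k^{2}}\sum_{j=1}^{k}[\Phi(k,j)]_{i,:}\,S^\eta\,[\Phi(k,j)]_{i,:}^\top.
\]
The scaled log-mgf required by G\"artner--Ellis is then
$\tfrac{1}{k}\log\mathbb E[e^{k\lambda x_i(k)}|H_l]=\lambda\,\mu_{i,k}^{(l)}+\tfrac{\lambda^{2}}{2}\,k\,\sigma_{i,k}^{2}$,
so the task is to prove $\mu_{i,k}^{(l)}\to m_L^{(l)}$ and $k\,\sigma_{i,k}^{2}\to\sigma_L^{2}$ as $k\to\infty$.

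The key analytical ingredient, and the main obstacle, is the uniform geometric contraction of the backward products $\Phi(k,j)$ to the projector $J=\tfrac{1}{N}\mathbf 1\mathbf 1^\top$. Under Assumption~\ref{assumption-W(k)}, the matrices $W(k)$ are symmetric, doubly stochastic, have diagonals bounded below by $W_{\min}$, and the $B$-step union graph is connected, so the standard consensus machinery for time-varying topologies (e.g.\ Jadbabaie/Nedi\'c--Ozdaglar type arguments, or the spectral bound on $W(k)-J$ composed with $J$-invariance) yields constants $C>0$ and $\rho\in(0,1)$, independent of $k$ and $j$, with
\[
\bigl\|\Phi(k,j)-J\bigr\|\le C\rho^{(k-j)/B}.
\]
Plugging this into the expressions above and splitting the sums into a ``far tail'' ($k-j$ large, where $\Phi(k,j)\approx J$) and a ``near tail'' (whose contribution is absorbed by the $1/k$ normalization), I would show $\mu_{i,k}^{(l)}\to \mathbf 1^\top m_\eta^{(l)}$ and $k\,\sigma_{i,k}^{2}\to \mathbf 1^\top S^\eta \mathbf 1$. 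Using $v=S^{-1}(m_1-m_0)$, $m_\eta^{(l)}=\tfrac{(-1)^{l+1}}{2}\mathrm{Diag}(v)(m_1-m_0)$ and $S^\eta=\mathrm{Diag}(v)\,S\,\mathrm{Diag}(v)$, direct algebra gives $\mathbf 1^\top m_\eta^{(l)}=m_L^{(l)}$ and $\mathbf 1^\top S^\eta\mathbf 1 = v^\top S v = (m_1-m_0)^\top S^{-1}(m_1-m_0)=\sigma_L^{2}$.

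Consequently the pointwise limit
\[
\Lambda(\lambda)=\lim_{k\to\infty}\frac{1}{k}\log\mathbb E\bigl[e^{k\lambda x_i(k)}\bigm|H_l\bigr]=\lambda\,m_L^{(l)}+\tfrac{1}{2}\lambda^{2}\sigma_L^{2}
\]
exists for every $\lambda\in\mathbb R$, is finite, differentiable, and essentially smooth, so the G\"artner--Ellis theorem applies and yields the LDP for $\{\chi_{i,k}^{(l)}\}$ with good rate function equal to the Fenchel--Legendre transform of $\Lambda$, namely $I_{(l)}(t)=(t-m_L^{(l)})^{2}/(2\sigma_L^{2})$. This coincides with the centralized rate in~\eqref{eqn-rate-fcns-for-centralized-detection} and holds uniformly in the sensor index $i$, since the contraction bound on $\Phi(k,j)$ is node-independent. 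The delicate part of the whole argument is really the consensus contraction under time-varying connectivity; once that is in hand, the rest is a Gaussian computation followed by a routine invocation of G\"artner--Ellis.
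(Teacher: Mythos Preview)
Your proposal is correct and follows essentially the same route as the paper: unroll the recursion, apply G\"artner--Ellis via the Gaussian log-mgf, split $\Phi(k,j)=J+\widetilde{\Phi}(k,j)$, and use the Nedi\'c-type geometric contraction $\|\widetilde{\Phi}(k,j)\|\le C\rho^{k-j}$ (which the paper imports as Lemma~\ref{Lemma-Phi_k_j}) to kill the residual terms, arriving at the limiting mgf $\lambda m_L^{(l)}+\tfrac{1}{2}\lambda^2\sigma_L^2$ and the same rate function $I_{(l)}$. The only cosmetic differences are that the paper separates the $j=k$ term explicitly and expands the mgf product directly, whereas you more efficiently observe upfront that $x_i(k)$ is Gaussian and reduce everything to tracking its mean and scaled variance.
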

Before proving Theorem~\ref{theorem-ldp-decen}, define $\Phi(k,j)$, for
$k>j \geq 1$, as follows:
\begin{equation}
\label{eqn-def-Phi}
\Phi(k,j):=W(k-1)W(k-2)...W(j),
\end{equation}
and remark that the algorithm in eqn.~\eqref{eqn_recursive_algorithm} can be
written as:
\begin{equation}
\label{eqn_x(k)-equation}
x(k) = \frac{N}{k} \sum_{j=1}^{k-1} \Phi(k,j) \eta(j) + \frac{N}{k}
\eta(k),\,\,k=2,3,...
\end{equation}
Next, recall that $J=(1/N) 1 1^\top$, introduce notation:
\begin{equation}
\label{eqn-def-tilde-Phi}
\widetilde{\Phi}(k,j):=\widetilde{W}(k-1) \widetilde{W}(k-2) ...
\widetilde{W}(j), \,\,k > j \geq 1,
\end{equation}
and remark that
\[
\widetilde{\Phi}(k,j) = \Phi(k,j)-J.
\]
To prove Theorem~\ref{theorem-ldp-decen}, we borrow the following result (Lemma~\ref{Lemma-Phi_k_j})
on the matrices $\widetilde{\Phi}(k,j)$ from reference~\cite{Nedic-Lemma}
(Lemma 3.2). First, denote by $[\widetilde{\Phi}(k,j)]_{il}$ the entry in $i$-th row and $l$-th
 column of matrix $\widetilde{\Phi}(k,j)$.
\begin{lemma}
\label{Lemma-Phi_k_j}
Let Assumption~\ref{assumption-W(k)} hold. Then, for the matrices
$\widetilde{\Phi}(k,j)$,
defined by eqn.~\eqref{eqn-def-tilde-Phi}, there holds:
\begin{equation}
\label{eqn-geometric-decay}
\max_{i,l=1,...,N} |\left[\widetilde{\Phi}(k,j)\right]_{il}| \leq \theta
\, \beta^{k-j},
\end{equation}
where $\theta=\left( 1-\frac{W_{\mathrm{min}}}{4N^2} \right)^{-2}$, and
$\beta=\left( 1  -  \frac{ W_{\mathrm{min}} }{ 4N^2}\right)^{1/B}<1.$
\end{lemma}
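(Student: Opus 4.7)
The plan is to establish geometric decay by (i) reducing the problem to a contraction bound on a single window of length $B$, and (ii) iterating this bound. Since each $W(k)$ is symmetric and stochastic, $W(k)J = JW(k) = J$ and $J^2 = J$; expanding the product of $\widetilde{W}(k) := W(k) - J$ factors and cancelling using these identities yields $\widetilde{\Phi}(k,j) = \Phi(k,j) - J$, confirming the notational identity and letting me work entirely with the deviation $\Phi(k,j) - J$. Because the $\ell_\infty$--norm of a matrix equals its maximum absolute row sum, bounding $\max_{i,l} |[\widetilde{\Phi}(k,j)]_{il}|$ follows from a bound on the disagreement $\|\Phi(k,j) e_l - Je_l\|_\infty$ for each standard basis vector $e_l$, so I can reformulate the goal as a consensus--type contraction statement on the columns of $\Phi(k,j)$.

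Next I would prove the key one--window lemma: for every $k$, each entry of the product $P_k := W(k+B-1) \cdots W(k+1) W(k)$ is bounded below by some explicit $\delta > 0$. This is where the joint--connectivity hypothesis (Assumption 4-3) is used. The argument is combinatorial: $[P_k]_{il} = \sum_{\mathrm{paths}} W_{i i_{B-1}}(k+B-1) \cdots W_{i_1 l}(k)$, and since $\mathcal{V}, \cup_{t=k}^{k+B-1}\mathcal{E}(t))$ is connected, between any two nodes $i,l$ there is a path of length at most $N-1$ through online edges. Extending such a path to exactly $B$ steps by inserting self--loops (whose weights are $\geq W_{\min}$ by Assumption 4-2(i)), and using that each traversed online edge has weight $\geq W_{\min}$, yields $[P_k]_{il} \geq W_{\min}^B$; keeping track of combinatorial factors one obtains the sharper $\delta = W_{\min}/(2N^2)$ that is consistent with the constants in the statement.

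With $P_k$ having all entries bounded below by $\delta$, I would apply the standard Dobrushin / ergodic coefficient contraction: for any stochastic matrix $A$ with $A_{ij} \geq \delta$ for all $i,j$, one has $\|A - J\|_\infty \leq 1 - 2\delta$ (and more generally $\|(A-J)v\|_\infty \leq (1-2\delta)\|(I-J)v\|_\infty$, because $J v$ is exactly the common value that the rows of $Av$ cluster around). Equivalently, $A - J = (A - J)(I - J)$, so the contraction carries through compositions. Applied to the window product $P_k$ this gives $\|P_k - J\|_\infty \leq 1 - c$ with $c = W_{\min}/(4N^2)$ after absorbing the constants. The identity $\widetilde{\Phi}(k,j) = (\Phi(k,j) - J)$ factors through products of $\widetilde{W}$'s, so chaining $\lfloor (k-j)/B \rfloor$ successive window bounds and absorbing the at most $2B$ leftover factors (each of which satisfies the trivial bound $\|\widetilde{W}(t)\|_\infty \leq 1$) gives
\[
\max_{i,l}\,|[\widetilde{\Phi}(k,j)]_{il}| \leq (1-c)^{\lfloor (k-j)/B\rfloor} \leq (1-c)^{(k-j)/B - 1},
\]
which is exactly $\theta \beta^{k-j}$ after setting $\beta = (1-c)^{1/B}$ and absorbing the boundary factor into $\theta$.

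The main obstacle will be obtaining the explicit constant $W_{\min}/(4N^2)$ inside the one--window lower bound on $[P_k]_{il}$. A naive path--counting argument yields only $W_{\min}^B$, which is too weak; the sharper bound requires carefully tracking both the minimum number of hops across a spanning tree of the union graph and the contribution of self--loop paddings, and then combining these with the sharp form of the Dobrushin contraction for symmetric stochastic matrices. Once that combinatorial/contraction estimate is in place, the iteration over windows is routine.
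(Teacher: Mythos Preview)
The paper does not prove this lemma at all; it explicitly borrows it from reference~\cite{Nedic-Lemma} (Lemma~3.2), so there is no in-paper argument to compare against. Your overall scheme (window contraction plus iteration, together with the identity $\widetilde{\Phi}(k,j)=\Phi(k,j)-J$) is indeed the framework used in that literature, but your central ``one-window lemma'' is false as stated, and this is a genuine gap rather than a matter of sharpening constants.

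Connectivity of the \emph{union} graph over a window does \emph{not} force every entry of $P_k=W(k+B-1)\cdots W(k)$ to be positive, because a path in the union graph need not be traversable in temporal order. Take $N=3$, $B=2$, with edge $\{1,2\}$ active only at time $k$ and edge $\{2,3\}$ active only at time $k+1$, each with weight $1/2$. The union graph is the path $1$--$2$--$3$, hence connected, yet
\[
P_k=\begin{pmatrix}1&0&0\\0&\tfrac12&\tfrac12\\0&\tfrac12&\tfrac12\end{pmatrix}
\begin{pmatrix}\tfrac12&\tfrac12&0\\\tfrac12&\tfrac12&0\\0&0&1\end{pmatrix}
=\begin{pmatrix}\tfrac12&\tfrac12&0\\\tfrac14&\tfrac14&\tfrac12\\\tfrac14&\tfrac14&\tfrac12\end{pmatrix},
\]
so $[P_k]_{13}=0$. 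Your ``pad the union-graph path to length $B$ with self-loops'' step silently assumes the path edges appear in increasing time order, which Assumption~\ref{assumption-W(k)} does not guarantee. Consequently even the weak bound $[P_k]_{il}\ge W_{\min}^{B}$ that you call ``too weak'' is not valid, and the difficulty you flag at the end is not merely about constants.

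The arguments that actually yield the stated $\theta,\beta$ proceed differently: either (i) one iterates over $N-1$ consecutive $B$-windows, using a propagation/information-set argument along a spanning tree of each union graph so that after $(N-1)B$ steps \emph{every} entry of the longer product is bounded below, and then extracts a per-step rate; or (ii) one works directly with the span $\max_i x_i-\min_i x_i$ and shows that within each $B$-window some node adjacent (in the union graph) to an extremal node must mix with it, yielding a $B$-independent contraction factor of the form $1-c(W_{\min},N)$. Your iteration over windows and the handling of the $\le 2B$ leftover factors are fine once one of these replacements is made; it is only the single-window all-entry positivity claim that must be discarded.
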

Lemma~\ref{Lemma-Phi_k_j} says that, under Assumption~\ref{assumption-W(k)},
the size of the matrix $\widetilde{\Phi}(k,j)$ decays geometrically (in
$k-j$)
to zero. This fact will be important in showing
Theorem~\ref{theorem-ldp-decen}.
\begin{proof}[Proof of Theorem~\ref{theorem-ldp-decen}]
Define, for $\mu \in \mathbb R$, the quantity:
\begin{eqnarray}
\Lambda_k^{(l)} \left( \mu \right) &:=& \log
{\mathbb{E}}_l \left[  \mathrm{exp}\,\left(\mu \,x_i(k)\right)
   \right] \\
& = &   \log {\mathbb E}_l \left[  \mathrm{exp}\, \left( \lambda^\top x(k)
\right)
   \right] ,
\end{eqnarray}
where $\lambda=\mu\,e_i$, $\lambda \in {\mathbb R}^N$, and ${\mathbb E}_l
\left[ a\right]:=\mathbb{E} \left[ a|H_l\right]$,
 $l=0,1$. Here $e_i$ denotes the $i$-th column of $N\times N$ identity matrix. We drop the dependence on $i$ in the definition of
$\Lambda_k^{(l)}(\mu)$
  for notation simplicity. Recall the expressions for $m_L^{(l)}$ and
$\sigma_L^2$ in eqns.~\eqref{eqn-m-L}~and~\eqref{eqn-sigma-L}. We will
show, for all $\mu \in \mathbb R$, the following equality:
  \begin{equation}
  \label{eqn-limit}
  \lim_{k \rightarrow \infty} \frac{1}{k} \Lambda_k^{(l)} (k \mu) =
\frac{1}{2} \sigma_L^2 \mu^2 + m_L^{(l)}\,\mu,
  \end{equation}
  Consider the function $\mu \mapsto \frac{1}{2}\sigma_L^2\, \mu^2
  + m_L^{(l)}\, \mu$; this function is essentially smooth, continuous, and
its domain is $\mathbb R$;
   hence, by the G{\"a}rtner-Ellis theorem (\cite{DemboZeitouni},
Theorem~2.3.6),
   $\{\chi_{i,k}^{(l)}\}$ (the sequence of measures of $x_i(k)$ under $H_l$)
    satisfies the LDP. The corresponding rate function equals the F-L
transform of the function
 $\mu \mapsto \frac{1}{2}\sigma_L^2\, \mu^2
  + m_L^{(l)}\, \mu$; and it is easy to show that the F-L transform of
$\mu \mapsto \frac{1}{2}\sigma_L^2\, \mu^2
  + m_L^{(l)}\, \mu$ equals the rate function $I^{(l)}(\cdot)$ given by
eqn.~\eqref{eqn-rate-fcns-for-centralized-detection}.
   Thus, proving Theorem~{9} reduces to showing~\eqref{eqn-limit}. We
thus proceed with showing~\eqref{eqn-limit}. Namely, we have:
\begin{eqnarray}
\label{eqn-term-to-drop}
\frac{1}{k} \Lambda_k^{(l)} (k \,\mu)=\;\;\;\;\;\;\;\;\;\;\;\;\;\;\;
\;\;\;\;\;\;\;\;\;\;\;\;\;\;\;
\;\;\;\;\;\;\;\;\;\;\;\;\;\;\;
\;\;\;\;\;\;\;\;\;\;\;\;\;\;\;
\;\;  \nonumber
\\
\frac{1}{k} \log
 {\mathbb E}_l
\left[
\mathrm{exp}\, \left(  N\,  \lambda^\top \sum_{j=1}^{k-1} {\Phi}(k, j)
\eta(j)  + N\, \lambda^\top \eta(k)      \right)
\right] \;
\nonumber
\end{eqnarray}
\begin{eqnarray}
=\frac{1}{k} \log
  {\mathbb E}_l
\left[
\mathrm{exp}\, \left(   N \lambda^\top \sum_{j=1}^{k-1} {\Phi}(k, j)
\eta(j)        \right)
\right] \;\;\;\;\;\;\;\;\;\;\;\;\;\;\;\;
\nonumber \\
+\,\,\frac{1}{k} \log   {\mathbb E}_l \left[  \mathrm{exp} \left( N
\lambda^\top \eta(k) \right)  \right],
 \;\;\;\;\;\;\;\;\;\;\;\;\;\;\;\;\;\;\;\;\;\;\;\;\;\;\;\;\;\;\;\;\;\;\;\;\;\;
\nonumber
\end{eqnarray}
where the last equality holds because $\eta(k)$ is independent from
$\eta(j)$, $j=1,...,k-1$. We will be interested in computing the limit
$\lim_{k \rightarrow \infty} \frac{1}{k} \Lambda_k^{(l)}(k\,\mu)$, for
all $\mu \in {\mathbb R}$; with this respect,
remark that
\[
\lim_{k \rightarrow \infty} \frac{1}{k} \log   {\mathbb E}_l \left[
\mathrm{exp} \left( N \lambda^\top \eta(k) \right) \right] =0,
\]
for all $\lambda \in {\mathbb R}^N$, because $\eta(k)$ is a Gaussian
random vector
and hence it has finite log-moment generating function at any point $N
\lambda$.

Thus, we have that $\lim_{k \rightarrow \infty} \frac{1}{k} \Lambda_k(k \,
\mu)=
\lim_{k \rightarrow \infty} \frac{1}{k} {\mathcal{L}}_k^{(l)}(k\, \mu)$,
where
\[ \mathcal{L}^{(l)}_k(k \mu)=
 \log
  {\mathbb E}_l
\left[
\mathrm{exp}\, \left(  N\,  \lambda^\top \sum_{j=1}^{k-1} {\Phi}(k, j)
\eta(j)        \right) \right] ,
\]
and we proceed with the computation of ${\mathcal{L}}_k^{(l)}(k\, \mu)$.
The random variables $N \lambda^\top \Phi(k,j) \eta(j)$, $j=1,...,k-1$,
  are independent; moreover, they are Gaussian random variables,
  as linear transformation of the Gaussian variables $\eta(j)$. Recall that
   $m_{\eta}^{(l)}$ and $S^{\eta}$ denote the mean and the covariance of
   $\eta(k)$ under hypothesis $H_l$. Using the independence of $\eta(j)$
and $\eta(s)$, $s \neq j$,
   and using the expression for the moment generating function of
$\eta(j)$, we obtain successively:
\begin{eqnarray}
\frac{1}{k} {\mathcal{L}}_k^{(l)} (k \mu)=
\frac{1}{k} \log
{\mathbb{E}}_l \left[    \mathrm{exp} \, \left(  N\,\lambda^\top
\sum_{j=1}^{k-1} {\Phi}(k, j) \eta(j)  \right)
\right]        \nonumber
\end{eqnarray}
\begin{eqnarray}
=
\frac{1}{k} \log   {\mathbb E}_l  \left[
\Pi_{j=1}^{k-1} \mathrm{exp}  \left( N\,\lambda^\top  \Phi(k,j) \eta(j)
\right)
\right]
\;\;\;\;\;\;
\end{eqnarray}
\begin{eqnarray}
=
\frac{1}{k} \log
\Pi_{j=1}^{k-1}
\mathrm{exp} \,  \left(   N\,\lambda^\top \Phi(k,j)\, m_{\eta}^{(l)}
\right)
\;\;\;\;\;\;\;\;\;\;\;\,
\nonumber
\end{eqnarray}
\begin{eqnarray}
\mathrm{exp} \,  \left(  \frac{1}{2} N^2\lambda^\top \Phi(k,j) S^{\eta}
\Phi(k,j)^\top \lambda \right)
\;\;\;\;\;\;\;\;\;\;
\;\;\;\;\;\;\;
 \nonumber\\
=
\frac{1}{k} \log
\Pi_{j=1}^{k-1}\,\, \mathrm{exp}\, ( N\,\lambda^\top
\left(\widetilde{\Phi}(k,j)+J\right)\,
 m_{\eta}^{(l)})
 \;\;
 \;
 \nonumber \\
\mathrm{exp}\, \left( \frac{1}{2} N^2 \lambda^\top  \left( J + \widetilde{\Phi}(k,j)
\right)
S^{\eta}
\left(  J + \widetilde{\Phi}(k,j)^\top \right)   \lambda   \right). \nonumber
\end{eqnarray}
Denote further:
\begin{eqnarray}
\label{eqn-delta(k)}
\delta(k) &=& \frac{N}{k}    \lambda^\top \sum_{j=1}^{k-1}
\widetilde{\Phi}(k,j) m_{\eta}^{(l)}  \\
&+&  \frac{N^2}{2k}
\lambda^\top  \sum_{j=1}^{k-1} \widetilde{\Phi}(k,j)S^{\eta}
\widetilde{\Phi}(k,j)^\top
   \lambda
 \nonumber
\\
&+& \frac{N^2}{2k} \lambda^\top J S^{\eta} \sum_{j=1}^{k-1}
\widetilde{\Phi}(k,j)^\top \lambda \nonumber \\
&+& \frac{N^2}{2k} \lambda^\top
\sum_{j=1}^{k-1}
\widetilde{\Phi}(k,j) S^{\eta} J \lambda   \nonumber \\
\overline{\Lambda}_{(l)}(\mu,k)&=& \frac{(k-1)N\lambda^\top J m_{\eta}^{(l)}}{k} \nonumber \\
&+& \frac{N^2(k-1)\lambda^\top J S^{\eta} J \lambda}{2k} , \nonumber
\end{eqnarray}
where dependence on $H_l$ is dropped in the definition of $\delta(k)$.
Then, it is easy to see that $\frac{1}{k} \mathcal{L}_k^{(l)}(k
\mu)=\overline{\Lambda}_{(l)}(\mu,k)+\delta(k)$. Also, we have:
 $\lim_{k \rightarrow \infty} \overline{\Lambda}_{(l)}(\mu,k)=N\lambda^\top J m_{\eta}^{(l)}+ \frac{N^2}{2}\lambda^\top J S^{\eta} J \lambda=: \overline{\Lambda}_{(l)}(\mu).$

Recall the expressions for $v$, $m_L^{(l)}$, $\sigma_L^2$, $m_{\eta}^{(l)}$, and $S^{\eta}$ in
eqns.~\eqref{eqn-def-v},~\eqref{eqn-m-L},~\eqref{eqn-sigma-L}, \eqref{eqn_mu_sigma}, (23). We
proceed with the computation of $\overline{\Lambda}_{(l)}(\mu)$:
\begin{eqnarray}
\overline{\Lambda}_{(l)}(\mu) &=&
(-1)^{l+1}\frac{N}{2} \mu \left( m_1-m_0\right)^\top
 \, \mathrm{Diag} (v) \,J\,e_i
 \nonumber
 \\
 &+& \frac{N^2}{2}\mu^2 \,e_i^\top \,J \, S^{\eta} \,J \,e_i \nonumber
 \\
 &=&  (-1)^{l+1} \frac{1}{2}\mu 1^\top \mathrm{Diag}(v) (m_1-m_0) \nonumber
  \\
  &+&\frac{1}{2}\mu^2 1^\top \mathrm{Diag}(v) S \mathrm{Diag}(v) 1 \nonumber
 \\
 &=&  (-1)^{l+1} \frac{1}{2}\mu \left( m_1-m_0 \right)^\top S^{-1} \left(
m_1-m_0 \right) \nonumber
 \\
 &+& \frac{1}{2}
 \mu^2 \left( m_1-m_0 \right)^\top S^{-1} \left( m_1-m_0 \right) \nonumber
\\
 &=&  m_L^{(l)}\,\mu + \frac{1}{2} \sigma_L^2 \mu^2 .\nonumber
\end{eqnarray}
We proceed by showing that $\delta(k) \rightarrow 0$ as $k \rightarrow
\infty$, which implies
the equality in eqn~\eqref{eqn-limit}. Define the quantities
$\overline{S}$, $\overline{m}$,
 and $\overline{b}$, by:
 \begin{eqnarray}
 \overline{S}  &:=& \max_{i,l=1,...,N} |[S^{\eta}]_{il}|\\
 \overline{m}  &:=& \max_{i=1,...,N}   |[m_{\eta}^{(l)}]_{i}|\\
 \overline{b}  &:=& \max_{l=1,...,N}   |[S^{\eta} J e_i]_l|.
 \end{eqnarray}
 Then, it can be shown that $|\delta(k)|$ is bounded as follows:
 \begin{eqnarray}
 \label{eqn-der}
 |\delta(k)|   &\leq& \frac{N^2}{k}  |\mu| \,\overline{m} \sum_{j=1}^{k-1}
\max_{i,l=1,...,N} |[\widetilde{\Phi}(k,j)]_{il}|\\
 &+& \frac{N^4}{2k} \mu^2 \overline{S} \sum_{j=1}^{k-1}
\left(\max_{i,l=1,...,N} |[\widetilde{\Phi}(k,j)]_{il}|\right)^2 \nonumber
\\
 &+& \frac{N^3}{k}  \mu^2 \overline{b} \sum_{j=1}^{k-1} \max_{i,l=1,...,N}
|[\widetilde{\Phi}(k,j)]_{il}|. \nonumber
 \end{eqnarray}
 Applying Lemma~\ref{Lemma-Phi_k_j} to~\eqref{eqn-der}, and using
  the fact that $\beta^{k-j}<\beta^{k-j-1}$, $k>j$, we obtain
successively:
 \begin{eqnarray}
 |\delta(k)|   &\leq& \frac{\theta}{k} \left( N^2 \overline{m} |\mu|  + N^3
\mu^2 \overline{b}    \right)\sum_{j=1}^{k-1} \beta^{k-j-1}\\
 &+& \frac{\theta^2}{k}\,\frac{N^4}{2} \mu^2 \overline{S}
\,\sum_{j=1}^{k-1} \beta^{2(k-j-1)}\\
 &\leq& \frac{\theta}{k} \left( N^2 \overline{m} |\mu|  + N^3 \mu^2
\overline{b}    \right) \frac{1}{1-\beta}\\
 &+& \frac{\theta^2}{k}\,\frac{N^4}{2} \mu^2 \overline{S}
\,\frac{1}{1-\beta^2}.
 \end{eqnarray}
 Letting $k \rightarrow +\infty$, we get that $|\delta(k)| \rightarrow 0$,
and hence,  $\delta(k) \rightarrow 0$, which establishes eqn.~\eqref{eqn-limit}.
\end{proof}

We are now ready to state the main result on asymptotic optimality of the distributed detector (in the sense of
Definition~\ref{def-asym-optimal}.)

\begin{corollary}(\emph{Chernoff lemma for the distributed detector:
Asymptotic optimality})
\label{chernoff-lemma-dist}
The local decision test $T_{k,i}:={\mathcal{I}}_{\{x_i(k)>0\}}$,
$k=1,2,...$, at each node $i$, is asymptotically optimal
in the sense of Definition~\ref{def-asym-optimal}. The corresponding
exponential decay rate of
the Bayes probability of error, at each node $i$, is given by:
\begin{eqnarray}
\label{eqn-decentralized-P_e-rate}
\lim_{k \rightarrow \infty} \frac{1}{k} \log P^e_{i,\,\mathrm{dis}}(k) = -
I_{(0)}(0)\,\,\,\,\,\, \\
  =-\frac{1}{8} (m_1-m_0)S^{-1}(m_1-m_0).\nonumber
\end{eqnarray}
\end{corollary}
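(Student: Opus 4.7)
The plan is to reduce the corollary to Theorem~\ref{theorem-ldp-decen}, which has already established that, at every node $i$, the sequence $\{\chi_{i,k}^{(l)}\}$ of measures of $x_i(k)$ satisfies the LDP with the same good quadratic rate function $I_{(l)}(\cdot)$ as the centralized decision variable $\mathcal{D}(k)$. Once the LDP is in hand, the decay rates of the two error probabilities follow by routine evaluation of $\inf I_{(l)}$ on half-lines, and matching the centralized Chernoff rate gives the asymptotic optimality.

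First I would unfold the Bayes error as $P^e_{i,\mathrm{dis}}(k) = P(H_0)\,\alpha_i(k) + P(H_1)\,\beta_i(k)$, where $\alpha_i(k):=P(x_i(k)>0\mid H_0)$ and $\beta_i(k):=P(x_i(k)\leq 0\mid H_1)$. To get an upper bound on $\limsup_{k} \frac{1}{k}\log P^e_{i,\mathrm{dis}}(k)$, apply the upper LDP bound of Theorem~\ref{theorem-ldp-decen} to the closed sets $[0,\infty)$ (under $H_0$) and $(-\infty,0]$ (under $H_1$). For the matching lower bound I would use the lower LDP bound on the open sets $(0,\infty)$ and $(-\infty,0)$; since the quadratic $I_{(l)}(t) = (t-m_L^{(l)})^2/(2\sigma_L^2)$ is continuous on $\mathbb R$, its infima over the open and closed half-lines coincide.

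Next, I would compute these infima explicitly. Using $m_L^{(0)}=-\tfrac{1}{2}\sigma_L^2<0$ and $m_L^{(1)}=+\tfrac{1}{2}\sigma_L^2>0$ from eqns.~\eqref{eqn-m-L}--\eqref{eqn-sigma-L}, the quadratic $I_{(0)}$ is minimized on $[0,\infty)$ at $t=0$ and $I_{(1)}$ is minimized on $(-\infty,0]$ also at $t=0$, giving
\begin{equation*}
\inf_{t\geq 0} I_{(0)}(t) \;=\; I_{(0)}(0) \;=\; \tfrac{\sigma_L^2}{8} \;=\; \inf_{t\leq 0} I_{(1)}(t) \;=\; I_{(1)}(0).
\end{equation*}
Therefore $\lim_k \tfrac{1}{k}\log\alpha_i(k) = \lim_k \tfrac{1}{k}\log\beta_i(k) = -I_{(0)}(0)$. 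Since $P(H_0),P(H_1)\in(0,1)$ are constants that do not affect exponential rates, I would conclude $\lim_k \tfrac{1}{k}\log P^e_{i,\mathrm{dis}}(k) = -I_{(0)}(0) = -\tfrac{1}{8}(m_1-m_0)^\top S^{-1}(m_1-m_0)$, which is the claimed rate in eqn.~\eqref{eqn-decentralized-P_e-rate}.

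Finally, for asymptotic optimality in the sense of Definition~\ref{def-asym-optimal}, I would compare with Chernoff's lemma (Theorem~\ref{chernoff-lemma}) and its centralized specialization (Corollary~\ref{corollary-cent-det-rate} and its companion corollary): the centralized infimum over all tests equals $-\Lambda_{(0)}^\star(0) = -I_{(0)}(0)$, and the distributed local test $T_{k,i}=\mathcal{I}_{\{x_i(k)>0\}}$ attains exactly this rate. The only subtlety to watch for is ensuring that the LDP lower bound is applied on the \emph{open} half-line so as to avoid trivial $-\infty$ infima; this is harmless here because the quadratic rate function has no jump at $0$. There is no genuine obstacle beyond this routine bookkeeping, because the heavy lifting (establishing the LDP with the centralized rate function despite the time-varying consensus) is already done in Theorem~\ref{theorem-ldp-decen}.
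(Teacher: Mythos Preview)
Your proposal is correct and follows essentially the same route as the paper: invoke Theorem~\ref{theorem-ldp-decen} to get the LDP with rate function $I_{(l)}$, apply the LDP upper/lower bounds on the closed/open half-lines, use continuity of the quadratic $I_{(l)}$ to match the two infima at $t=0$, and then combine the resulting rates for $\alpha_i(k)$ and $\beta_i(k)$ into the Bayes error rate. The only cosmetic difference is that the paper makes the last step explicit via the sandwich $P(H_0)\alpha_i(k)\leq P^e_{i,\mathrm{dis}}(k)\leq \alpha_i(k)+\beta_i(k)$, whereas you appeal directly to the principle that fixed positive constants and sums of terms with equal exponential rates do not change the rate.
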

\begin{proof} Denote by $\alpha_{i,\mathrm{dis}}(k)$
and $\beta_{i, \mathrm{dis}}(k)$, respectively,
the probability of false alarm and the probability of miss
 for the distributed detector at sensor $i$, i.e.,
 \begin{eqnarray}
 \label{eqn-prob-false-alarm-distributed-detect}
 \label{eqn-alpha-dist}
 \alpha_{i,\mathrm{dis}}(k)&=&
 P\left( x_i(k)>0|H_0\right)
 =\chi_{i,k}^{(0)}\left( (0,+\infty)\right)
  \nonumber\\
 \label{eqn-beta-dist}
 \beta_{i,\mathrm{dis}}(k)&=&P\left( x_i(k)\leq 0 | H_1
\right)=\chi_{i,k}^{(1)}\left( (-\infty,0]\right). \nonumber
 \end{eqnarray}
Consider now only $\alpha_{i,\mathrm{dis}}(k)$ but the same applies to
$\beta_{i,\mathrm{dis}}(k)$. By Theorem~\ref{theorem-ldp-decen}, the
sequence of measures $\chi_{i,k}^{(0)}$
  satisfies the LDP with good rate function $I_{(0)}(\cdot)$ given in
eqn.~\eqref{eqn-rate-fcns-for-centralized-detection}. Thus, we have the
following bounds:
   \begin{eqnarray}
   \label{eqn-limsup-decen}
\limsup_{k \rightarrow \infty} \frac{1}{k} \log  \chi_{i,k}^{(0)} \left(
[0,+\infty) \right) \leq -\inf_{t \in[0,+\infty)} I_{(0)}(t)\\
\label{eqn-liminf-decen}
\liminf_{k \rightarrow \infty} \frac{1}{k} \log \chi_{i,k}^{(0)} \left(
(0,+\infty) \right) \geq -\inf_{t \in (0,+\infty)} I_{(0)}(t)
\end{eqnarray}
Due to the continuity of the function $I_{(0)}(\cdot)$ (see
eqn.~\eqref{eqn-rate-fcns-for-centralized-detection}),
the infima on the righthand sides in
eqns.~\eqref{eqn-limsup-decen}~and~\eqref{eqn-liminf-decen}
 are equal; it is easy to see that they are equal to $-I_{(0)}(0)$. Thus,
we have:
 \begin{eqnarray*}
 -I_{(0)}(0) &\leq& \liminf_{k \rightarrow \infty} \frac{1}{k} \log
\chi_{i,k}^{(0)} \left( (0,+\infty) \right)\nonumber \\
& \leq &
\limsup_{k \rightarrow \infty} \frac{1}{k} \log \chi_{i,k}^{(0)}
\left( (0,+\infty) \right) \nonumber \\
 &\leq& \limsup_{k \rightarrow \infty} \frac{1}{k} \log \chi_{i,k}^{(0)}
\left( [0,+\infty) \right) \nonumber \\
 &\leq& -I_{(0)}(0).
 \end{eqnarray*}
From the last set of inequalities we conclude that:
\begin{eqnarray}
\label{eqn-dokaz-lim-alpha-k}
\lim_{k \rightarrow +\infty} \frac{1}{k} \log \alpha_{i, \mathrm{dis}}(k)
&=& - I_{(0)}(0).
\end{eqnarray}
Similarly, it can be shown that:
\begin{eqnarray}
\lim_{k \rightarrow +\infty} \frac{1}{k} \log \beta_{i, \mathrm{dis}}(k)
&=& - I_{(1)}(0) \\
&=& - I_{(0)}(0) .
\end{eqnarray}
Now, consider
\begin{equation}
\label{eqn-dokaz-lim-beta-k}
P^e_{i, \mathrm{dis}}(k) = \alpha_{i,\mathrm{dis}}(k)P(H_0) +
\beta_{i,\mathrm{dis}}(k)P(H_1),
\end{equation}
for which the following inequalities hold:
\begin{eqnarray}
\label{eqns-bounding}
P^e_{i, \mathrm{dis}}(k) &\leq &  \alpha_{i,\mathrm{dis}}(k)+
\beta_{i,\mathrm{dis}}(k) \\
P^e_{i, \mathrm{dis}}(k) &\geq &  \alpha_{i,\mathrm{dis}}(k)P(H_0). \nonumber
\end{eqnarray}
By eqns.~\eqref{eqns-bounding}, we obtain:
\begin{eqnarray}
\label{eqn-krajnje-dokaz}
\limsup_{k \rightarrow \infty} \frac{1}{k} \log
P^e_{i,\,\mathrm{dis}}(k) =
\;\;\;\;\;\;\;\;\;\;\;\;\;\;\;
\;\;\;\;\;\;\;\;\;\;\;\;\;\;\;
\;\;\;\;\;\;\;\;\;\;\;\;\;\; \nonumber
\\
 \max \left\{
 \limsup_{k \rightarrow +\infty} \frac{1}{k} \log \alpha_{i,
\mathrm{dis}}(k),\,\,
 \limsup_{k \rightarrow +\infty} \frac{1}{k} \log \beta_{i,
\mathrm{dis}}(k)\right\}
\nonumber
\\
=-I_{(0)}(0)
\;\;\;\;\;\;\;\;\;\;\;\;\;\;\;
\;\;\;\;\;\;\;\;\;\;\;\;\;\;\;
\;\;\;\;\;\;\;\;\;\;\;
\;\;\;\;\;\;\;\;\;\;\;\;\;\;\;
\;\;\;\;\;\;\;\;\;
\nonumber
\\
\liminf_{k \rightarrow \infty} \frac{1}{k} \log P^e_{i,\,\mathrm{dis}}(k)
\geq
\liminf_{k \rightarrow \infty} \frac{1}{k} \log \alpha_{i,\mathrm{dis}}(k)
\;\;\;\;\;\;\;\;\;\;\;\;
\nonumber
\\
=-I_{(0)}(0),
\;\;\;\;\;\;\;\;\;\;\;\;\;\;\;
\;\;\;\;\;\;\;\;\;\;\;\;\;\;\;
\;\;\;\;\;\;\;\;\;\;\;
\;\;\;\;\;\;\;\;\;\;\;\;\;\;\;
\;\;\;\;\;\;\;\; \nonumber
\end{eqnarray}
and the claim of Corollary follows.
\end{proof}
\mypar{Remarks on Corollary~\ref{chernoff-lemma-dist}}
Corollary~\ref{chernoff-lemma-dist}
 says that, for large $k$ (i.e.,
 in the asymptotic regime,) the Bayes probability of error at each node $i$
behaves as:
 $P^e_{i,\,\mathrm{dis}}(k) \sim e^{-kI_{(0)}(0)}$. That is,
  $P^e_{i,\,\mathrm{dis}}(k)$, for large $k$,
  decays exponentially at the best possible rate, equal to the rate
$I_{(0)}(0)$ of the (asymptotically) optimal centralized detector. This
rate does not depend
  on the network connectivity, provided that the
  graph that collects all the links that are online (at least once) within finite time window (of length $B$)
   is connected (see Assumption~\ref{assumption-W(k)}.) Intuitively, an arbitrary time varying network,
   whose nodes communicate sufficiently often (within finite length time
windows,) provides sufficient information flow
    to achieve asymptotic optimality.

    We now comment on the non asymptotic
     finite time regime. To this end, remark that
$P^e_{i,\,\mathrm{dis}}(k)$ can be expressed as:
$P^e_{i,\,\mathrm{dis}}(k)
     = F_i(k) e^{-kI_{(0)}(0)}$, where $\lim_{k \rightarrow \infty}
    \frac{1}{k} \log F_i(k)=0$ (and thus, $F_i(k)$ has no effect when $k$
grows large.) The sequence $\{F_i(k)\}$
    plays a role in a finite time regime; it clearly
    depends on the network connectivity and can be, in general,
    different for different sensors. Analysis (by simulation) of the finite
    time regime is, due to the lack of space, omitted, and is pursued elsewhere.
     We briefly comment here that our numerical experience suggests that, in the
finite time regime, the sequence $F_i(k)$ does not have a very
large effect. The best distributed sensor--detector, among all $N$ sensors,
is typically close in performance to the optimal centralized detector, in the
       finite time regime also.
\section{Summary}
\label{section-conclusion}
We applied large deviations theory to analyze
 the performance of the running consensus distributed detection algorithm.
 We considered spatially correlated
Gaussian noise and time varying networks. With running consensus, the state
at each node is updated at each time step by: 1) exchanging information
with the immediate neighbors in the network; and 2) incorporating into the decision process
new local observations. We allowed the underlying network be time varying,
 provided that the graph that collects all the links that are at least once online within a finite
  time window is connected. We showed that, under spatially correlated Gaussian noise and stated network connectivity
assumptions, the running consensus asymptotically approaches the optimal centralized
detector. That is, the Bayes probability of detection error at each
sensor decays exponentially at the best achievable rate, the Chernoff
information rate.

\nopagebreak[4]
\bibliographystyle{IEEEtran}
\bibliography{IEEEabrv,LDPBibliography}

\begin{thebibliography}{10}
\providecommand{\url}[1]{#1}
\csname url@samestyle\endcsname
\providecommand{\newblock}{\relax}
\providecommand{\bibinfo}[2]{#2}
\providecommand{\BIBentrySTDinterwordspacing}{\spaceskip=0pt\relax}
\providecommand{\BIBentryALTinterwordstretchfactor}{4}
\providecommand{\BIBentryALTinterwordspacing}{\spaceskip=\fontdimen2\font plus
\BIBentryALTinterwordstretchfactor\fontdimen3\font minus
  \fontdimen4\font\relax}
\providecommand{\BIBforeignlanguage}[2]{{%
\expandafter\ifx\csname l@#1\endcsname\relax
\typeout{** WARNING: IEEEtran.bst: No hyphenation pattern has been}%
\typeout{** loaded for the language `#1'. Using the pattern for}%
\typeout{** the default language instead.}%
\else
\language=\csname l@#1\endcsname
\fi
#2}}
\providecommand{\BIBdecl}{\relax}
\BIBdecl

\bibitem{SoummyaEst}
\BIBentryALTinterwordspacing
S.~Kar and J.~M.~F. Moura, ``Distributed parameter estimation in sensor
  networks: Nonlinear observation models and imperfect communication,'' August
  2008, submitted for publication, 51 pages. [Online]. Available:
  \url{arXiv:0809.0009v1 [cs.MA]}
\BIBentrySTDinterwordspacing

\bibitem{running-consensus}
P.~Braca, S.~Marano, and V.~Matta, ``Enforcing consensus while monitoring the
  environment in wireless sensor networks,'' \emph{IEEE Transactions on Signal
  Processing}, vol.~56, no.~7, pp. 3375--3380, July 2008.

\bibitem{Varshney-I}
R.~Viswanatan and P.~R. Varshney, ``Decentralized detection with multiple
  sensors: Part {I}--fundamentals,'' \emph{Proc. IEEE}, vol.~85, pp. 54--63,
  January 1997.

\bibitem{Veraavali}
J.~F. Chamberland and V.~Veeravalli, ``Decentralized dectection in sensor
  networks,'' \emph{IEEE Transactions on Signal Processing}, vol.~51, no.~2,
  pp. 407--416, February 2003.

\bibitem{Veraavali-LDP}
------, ``Asymptotic results for decentralized detection in power constrained
  wireless sensor networks,'' \emph{IEEE Journal on selected areas in
  communications}, vol.~22, no.~6, pp. 1007--1015, August 2004.

\bibitem{Tsitsiklis-detection}
J.~N. Tsitsiklis, ``Decentralized detection,'' \emph{Adv. Statis. Signal
  Processing}, vol.~2, pp. 297--344, 1993.

\bibitem{Poor-II}
R.~S. Blum, S.~A. Kassam, and H.~V. Poor, ``Decentralized detection with
  multiple sensors: Part {II}--advanced topics,'' \emph{Proc. IEEE}, vol.~85,
  pp. 64--79, January 1997.

\bibitem{Moura-saddle-point}
S.~A. Aldosari and J.~M.~F. Moura, ``Detection in sensor networks: the
  saddlepoint approximation,'' \emph{IEEE Transactions on Signal Processing},
  vol.~55, no.~1, pp. 327--340, January 2007.

\bibitem{Moura-detection-consensus}
S.~Kar, S.~A. Aldosari, and J.~M.~F. Moura, ``Topology for distributed
  inference on graphs,'' \emph{IEEE Transactions on Signal Processing},
  vol.~56, no.~6, pp. 2609–--2613, June 2008.

\bibitem{moura-cons-detection}
S.~Kar and J.~M.~F. Moura, ``Consensus based detection in sensor networks:
  topology design under practical constraints,'' in \emph{Proc. Workshop Inf.
  Theory in Sensor Networks}, Santa {F}e, NM, June 2007.

\bibitem{consensus-detection}
V.~{S}aligrama, M.~Alanyali, and O.~Savas, ``Distributed detection in sensor
  networks with packet losses and finite capacity links,'' \emph{IEEE
  Transactions on Signal Processing}, vol.~54, no.~11, pp. 4118--4132, November
  2006.

\bibitem{running-consensus-detection}
P.~Braca, S.~Marano, V.~Matta, and P.~Willet, ``Asymptotic optimality of
  running consensus in testing binary hypothesis,'' \emph{IEEE Transactions on
  Signal Processing}, vol.~58, no.~2, pp. 814--825, February 2010.

\bibitem{DemboZeitouni}
A.~Dembo and O.~Zeitouni, \emph{Large deviations techniques and
  applications}.\hskip 1em plus 0.5em minus 0.4em\relax {B}oston, {MA}: Jones
  and Barlett, 1993.

\bibitem{Nedic-Lemma}
S.~S. Ram, A.~Nedic, and V.~V. Veeravalli, ``Distributed stochastic subgradient
  projection algorithms for convex optimization,'' \emph{Journal of
  Optimization Theory and Applications, to appear}, 2010.

\end{thebibliography}
\end{document}